%%%%%%%%%%%%%%%%%%%%%%%%%%%%%%%%%%%%%%%%%%%%%%%%%%%%%%%%%%%%%%%%%%%%%%%%%%%%
%% Trim Size: 11.25in x 8in
%% Text Area: 9in (include runningheads) x 6in
%% Main Text: 11/14pt
%% ap-jnmp.cls
%%%%%%%%%%%%%%%%%%%%%%%%%%%%%%%%%%%%%%%%%%%%%%%%%%%%%%%%%%%%%%%%%%%%%%%%%%%%

%\documentclass{ap-jnmp}
\documentclass{article}

\usepackage{amsmath}
\usepackage{amssymb}
\usepackage{amsthm}
\usepackage[hidelinks, hypertexnames=false]{hyperref}
\usepackage{tikz}
\usepackage{appendix}

\newtheorem{Theorem}{Theorem}

\numberwithin{Theorem}{section}

%\markboth{Marcus Kardell}{Peakon creation in Camassa--Holm and Novikov}
%
%%%%%%%%%%%%%%%%%%%%%% Publisher's Area please ignore %%%%%%%%%%%%%%%
%%\catchline{}{}{}{}{}
%%%%%%%%%%%%%%%%%%%%%%%%%%%%%%%%%%%%%%%%%%%%%%%%%%%%%%%%%%%%%%%%%%%%%
%\copyrightauthor{}
%
%\title{New solutions with peakon creation \\
%in the Camassa--Holm and Novikov equations}
%
%\author{\footnotesize Marcus Kardell}
%
%\address{Department of Mathematics, Link\"oping University\\ 
%Link\"oping, 581 83, Sweden\\
%\email{marcus.kardell@liu.se}}

\begin{document}

%\maketitle
%\thispagestyle{empty}
%
%\vphantom{\vbox{%
%\begin{history}
%\received{(Day Month Year)}
%\revised{(Day Month Year)}
%\accepted{(Day Month Year)}
%%\comby{(xxxxxxxxx)}
%\end{history}
%}}

\begin{center}
\textbf{\large NEW SOLUTIONS WITH PEAKON CREATION IN\\
THE CAMASSA--HOLM AND NOVIKOV EQUATIONS}
\end{center}

\begin{center}
{\footnotesize M. KARDELL}
\end{center}

\vphantom{Text}

\begin{abstract}
In this article we study a new kind of unbounded solutions to the Novikov equation, found via a Lie symmetry analysis. These solutions exhibit peakon creation, i.e., these solutions are smooth up until a certain finite time, at which a peak is created. We show that the functions are still weak solutions for those times where the peak lives. We also find similar unbounded solutions with peakon creation in the related Camassa--Holm equation, by making an ansatz inspired by the Novikov solutions. Finally, we see that the same ansatz for the Degasperis--Procesi equation yields unbounded solutions where a peakon is present for all times.\end{abstract}

%\keywords{Novikov, Camassa--Holm, peakon, weak solution}
%\ccode{2000 Mathematics Subject Classification: 76M60, 58D19, 35D30}

\section{Introduction}

In 1993, Camassa and Holm \cite{CHOriginal} discovered an integrable partial differential equation within the context of shallow water theory, an equation which has since been studied quite extensively. One reason for the interest in this equation is that it allows (weak) explicit solutions in the form of so called multipeakons. More recent equations with similar properties include the Degasperis--Procesi \cite{DPOriginal} and the Novikov \cite{Novikov} equations. 

The results of this article originated from a Lie symmetry analysis of the Novikov equation. This framework gives a complete list of transformations such that each solution of the equation is mapped to another solution. In the resulting list of transformations, there are two nontrivial transformations which we use to produce new solutions to the Novikov equation.

In fact, applying the new transformations found in this article to the Novikov one-peakon solution gives an unbounded solution displaying quite interesting behaviour. We find that this solution depends smoothly on \(x\) for some interval in time, and has peakon creation (or destruction, depending on the transformation) at some finite time \(t\). We also show that these functions are still weak solutions for those times for which the peak lives.

By making an ansatz inspired by the Novikov solutions with peakon creation, we also find such solutions to the Camassa--Holm equation. It is interesting to note that, apparently, these solutions cannot be found using Camassa--Holm symmetries. Another thing to note is that the same ansatz does not give peakon creation in the closely related Degasperis--Procesi equation, instead we find a kind of unbounded peakon solution where the peak lives for all times.

\section{Novikov Solutions with Peakon Creation}

The Novikov equation, given by 
\begin{equation}
\label{eq:Novikov}
u_t - u_{xxt} = -4u^2u_x + 3uu_xu_{xx} + u^2u_{xxx},
\end{equation}
admits multi-peakon solutions 
\begin{equation}
u(x, t) = \sum_{i=1}^n m_i(t)e^{-|x - x_i(t)|}
\end{equation}
in a weak sense. The word peakon is short for `peaked soliton', where peaked means that there is some point where the left and right derivatives do not coincide. The peakons interact in quite a complicated way; see \cite{NovikovPeakons} for explicit time dependence of the functions \(\{x_i(t), m_i(t)\}\) and a weak formulation of the problem.

Consider the one-peakon solution \(u(x, t) = ce^{-\left|x - c^2t\right|}\). This is a peakon traveling to the right, with constant speed equal to the square of the height of the peakon (which differs from Camassa--Holm and Degasperis--Procesi peakons, where the speed is just equal to the height). For fixed \(t\), the peakon looks as in Figure \ref{fig:Peakon}.

\begin{figure}
\begin{center}
\begin{tikzpicture}
\draw[->] (-4, 0) -- (4.1, 0) node[right] {\(x\)}; 
\draw[->] (0, -0.5) -- (0, 1.5) node[above] {\(u(x, t)\)};
\draw [domain = -4.0 : 0.8] plot (\x, {exp(\x - 0.8)}); 
\draw [domain = 0.8 : 4.0] plot (\x, {exp(-\x + 0.8)}); 
\draw (0.8, 0.1) -- (0.8, -0.1) node[below] {\(c^2t\)};
\draw (0.1, 1) -- (-0.1, 1) node[left] {\(c\)};
\end{tikzpicture}
\caption{One-peakon solution}
\label{fig:Peakon}
\end{center}
\end{figure}
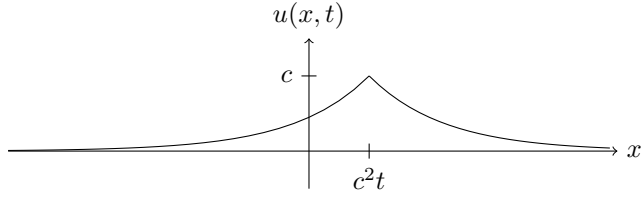

In the Appendix, \hyperref[Th:3]{Theorem \ref{Th:3}}, we compute the Lie symmetries of the Novikov equation. These correspond to transformations that take known (strong) solutions of the equation to other solutions. We repeat here the result for convenience.
\begin{Theorem}
If \(u = f(x, t)\) solves the Novikov equation (\ref{eq:Novikov}), then so do
\begin{gather*}
u_1 = f(x - \varepsilon, t), \\
u_2 = f(x, t - \varepsilon), \\
u_3 = e^{\varepsilon/2} f(x, te^{\varepsilon}), \\
u_4 = \sqrt{1 + 2\varepsilon e^{2x}} f\left(-\frac{1}{2} \textnormal{ln}\left(e^{-2x} +2\varepsilon\right), t\right), \\
u_5 = \sqrt{1 + 2\varepsilon e^{-2x}} f\left(\frac{1}{2} \textnormal{ln}\left(e^{2x} +2\varepsilon\right), t\right).\end{gather*}
\end{Theorem}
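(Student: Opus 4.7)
My plan is to verify each of the five transformations by direct substitution into (\ref{eq:Novikov}). Items $u_1$ and $u_2$ are immediate, since (\ref{eq:Novikov}) is autonomous in $x$ and $t$, so translating either coordinate sends solutions to solutions. Item $u_3$ is a scaling check: setting $\tau = t e^{\varepsilon}$ and $u = e^{\varepsilon/2} f$, every $t$-derivative picks up a factor $e^{\varepsilon}$ and every factor of $u$ contributes $e^{\varepsilon/2}$, so each nonzero term of (\ref{eq:Novikov}) scales by the common factor $e^{3\varepsilon/2}$, leaving the Novikov equation for $f$ at $(x, \tau)$.

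For the nontrivial items $u_4$ and $u_5$, the key idea is to pass to the equivalent form of (\ref{eq:Novikov}) standard in the peakon literature,
\[
m_t + u^2 m_x + 3 u u_x m = 0, \qquad m := u - u_{xx}.
\]
Writing $u_4(x,t) = g(x) f(y(x), t)$ with $g(x) = \sqrt{1 + 2\varepsilon e^{2x}}$ and $y(x) = -\tfrac{1}{2} \ln(e^{-2x} + 2\varepsilon)$, two elementary identities follow directly from the definitions and drive the whole computation:
\[
y'(x) = \frac{1}{g(x)^{2}}, \qquad g(x)\, g'(x) = g(x)^2 - 1.
\]
I expect the chain-rule computation of $(u_4)_{xx}$ to enjoy substantial cancellation and collapse to $g''(x) f + f_{yy}/g(x)^{3}$; combined with the one-line consequence $g - g'' = 1/g^3$ of the second identity, this yields the compact formula $m_{u_4}(x, t) = m_f(y(x), t)/g(x)^{3}$. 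Substituting $u_4$, $(u_4)_x$, and $m_{u_4}$ into the $m$-equation and once more invoking $g g' = g^2 - 1$ to cancel two competing terms, what remains should bundle into $g(x)^{-3}$ times the $m$-equation for $f$ at $(y(x), t)$, which vanishes by assumption. The verification for $u_5$ proceeds entirely analogously with $\tilde g(x) = \sqrt{1 + 2\varepsilon e^{-2x}}$ and $\tilde y(x) = \tfrac{1}{2} \ln(e^{2x} + 2\varepsilon)$; here the driving identities become $\tilde y'(x) = 1/\tilde g(x)^{2}$ and $\tilde g(x)\, \tilde g'(x) = 1 - \tilde g(x)^{2}$, and the same cancellation mechanism carries through.

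The main obstacle is the bookkeeping: a naive substitution into the third-order form (\ref{eq:Novikov}) generates many terms, but passing through the $m$-equation and exploiting the two algebraic identities above keeps the derivation compact. The complementary route, carried out in the Appendix, is to derive the entire list by a full Lie symmetry analysis; that approach is needed to \emph{discover} the list $u_1, \dots, u_5$ in the first place, whereas the route sketched here serves only as verification.
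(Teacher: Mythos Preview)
Your plan is correct and, as you yourself note, takes a genuinely different route from the paper. The paper establishes the theorem by carrying out a full Lie point-symmetry analysis in the Appendix: it makes an ansatz for the infinitesimal generator, prolongs it to third order, applies it to the equation, and solves the resulting determining system (with the \texttt{Jets} Maple package), obtaining the five generators $\mathbf v_1,\dots,\mathbf v_5$ which are then exponentiated to produce $u_1,\dots,u_5$. For the verification step the paper simply remarks that the last two are ``best checked by computer.'' By contrast, you verify the given transformations directly, and for $u_4,u_5$ you do so by hand via the $m$-form $m_t + u^2 m_x + 3uu_x m = 0$, exploiting the identities $y' = 1/g^2$ and $gg' = g^2-1$; these force the clean factorisation $m_{u_4} = m_f/g^3$, and the leftover $\pm 3g' f^2 m_f/g^2$ terms cancel on the nose (you do not actually need $gg'=g^2-1$ again at that last step). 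Your approach is more elementary and entirely self-contained, and it explains transparently \emph{why} the conjugation by $g$ works; the paper's approach, on the other hand, is what is needed to \emph{discover} the list and to show that these five exhaust the classical point symmetries, a completeness statement your verification does not (and need not, given the theorem as stated) address.
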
 
In this section we study the functions that one gets by transforming the one-peakon solution. Note though, that the one-peakon is not a smooth solution, so we can not say \emph{a priori} whether this approach gives valid weak solutions of the Novikov equation, this has to be checked. Applying the first three transformations gives us translations and scaling of a peakon, hence no essentially new solutions come up. The fourth and fifth tranformations are more interesting. They give the functions
\begin{subequations}
\begin{gather}
u_4(x,t) = c\sqrt{1 + 2\varepsilon e^{2x}} e^{-\left|\frac{1}{2} \textnormal{ln}\left(e^{-2x} + 2\varepsilon\right) + c^2t\right|},\\
u_5(x,t) = c\sqrt{1 + 2\varepsilon e^{-2x}} e^{-\left|\frac{1}{2} \textnormal{ln}\left(e^{2x} + 2\varepsilon\right) - c^2t\right|}. \label{NovikovTransformedPeakonB}
\end{gather}
\end{subequations}
Note that these solutions do not tend to zero as \(|x| \to \infty\). Let us first study the function \(u_5(x,t)\).

\begin{Theorem} \label{NovikovPeakonCreationSolution}
The transformed Novikov peakon \[u_5(x,t) = c\sqrt{1 + 2\varepsilon e^{-2x}} e^{-\left|\frac{1}{2} \textnormal{ln}\left(e^{2x} + 2\varepsilon\right) - c^2t\right|}\] is a smooth solution to the Novikov equation up until \(t_0 = \frac{1}{2c^2}\textnormal{ln}(2\varepsilon)\), when a peak is created at \(x = -\infty\). After time \(t_0\), the function is still a weak solution.
\end{Theorem}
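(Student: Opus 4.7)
The plan is to analyze the sign of the quantity $A(x,t) := \frac{1}{2}\ln(e^{2x} + 2\varepsilon) - c^2 t$ appearing inside the absolute value. As a function of $x$, $A$ is strictly increasing with infimum $\frac{1}{2}\ln(2\varepsilon) - c^2 t$, so $A > 0$ for every $x$ precisely when $t < t_0$.

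In the smooth phase $t < t_0$, I would drop the absolute value and simplify: pulling $\sqrt{e^{2x}+2\varepsilon}$ out of $e^{-A}$ and combining with the prefactor $c\sqrt{1+2\varepsilon e^{-2x}}$ leaves the elementary expression $u_5(x,t) = c\,e^{c^2 t - x}$. Since every derivative of this function is $\pm u_5$, substitution into \eqref{eq:Novikov} reduces to $0 = 0$ at once, so $u_5$ is a classical solution up to $t = t_0$.

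For $t > t_0$, the equation $A(x_0,t) = 0$ has the unique solution $x_0(t) = \frac{1}{2}\ln(e^{2c^2 t} - 2\varepsilon)$, which escapes to $-\infty$ as $t \downarrow t_0$; this is the precise sense in which a peak is born at $x = -\infty$ at time $t_0$. The same simplification now applies on two branches: for $x > x_0(t)$ we recover $u_5 = c\,e^{c^2 t - x}$, and for $x < x_0(t)$ the product $(1+2\varepsilon e^{-2x})(e^{2x}+2\varepsilon)$ is a perfect square, giving $u_5 = c\,e^{-c^2 t}(e^x + 2\varepsilon e^{-x})$. I would then check continuity of $u_5$ at $x_0(t)$, confirm that $u_x$ has a nontrivial jump there (so the peak is genuine), and verify by direct substitution that each branch satisfies the Novikov equation classically on its open half-line.

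The main obstacle is the weak-solution verification across the moving peak. Because $u_5$ grows exponentially as $x \to -\infty$, the Green's function / convolution framework used for bounded multi-peakons (see \cite{NovikovPeakons}) diverges and does not apply directly. Instead, I would test against $\phi \in C_c^{\infty}(\mathbb{R}\times(t_0,\infty))$ in a quasi-conservative rewriting of \eqref{eq:Novikov} obtained from the identities $u^2 u_{xxx} = \partial_x^2(u^2 u_x) - 6 u u_x u_{xx} - 2 u_x^3$ and $u u_x u_{xx} = \frac{1}{2}\partial_x(u u_x^2) - \frac{1}{2} u_x^3$. After integration by parts, the resulting integrand involves only $u$ and $u_x$, both of which are bounded on the support of $\phi$. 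Splitting the integral along $x = x_0(t)$ and using the classical PDE on each side reduces the weak identity to a Rankine--Hugoniot type jump condition on the peak, which in turn comes down to the peakon-speed relation $\dot x_0(t) = u(x_0(t),t)^2$. A direct computation gives $\dot x_0(t) = c^2 e^{2c^2 t}/(e^{2c^2 t} - 2\varepsilon)$, and the common continuity value at $x_0(t)$ is $c e^{c^2 t}/\sqrt{e^{2c^2 t} - 2\varepsilon}$, whose square matches $\dot x_0(t)$ exactly. This verifies the jump condition and shows that $u_5$ remains a weak solution for $t > t_0$.
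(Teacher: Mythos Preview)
Your treatment of the smooth phase and the piecewise simplification for $t>t_0$ is correct and matches the paper exactly: the paper obtains the same branches $u^+=c\,e^{-x+c^2t}$ and $u^-=c(e^x+2\varepsilon e^{-x})e^{-c^2t}$ with the same peak location $B(t)=\tfrac12\ln(e^{2c^2t}-2\varepsilon)$. Your rewriting via the identities for $u^2u_{xxx}$ and $uu_xu_{xx}$ also lands on precisely the weak formulation used in the paper,
\[
\bigl\langle (1-\partial_x^2)u_t+(4-\partial_x^2)\partial_x\bigl(\tfrac13 u^3\bigr)+\partial_x\bigl(\tfrac32 uu_x^2\bigr)+\tfrac12 u_x^3,\ \phi\bigr\rangle=0,
\]
so the overall route is the same.

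The gap is your claim that the resulting jump conditions ``come down to the peakon-speed relation $\dot x_0=u(x_0)^2$.'' When you split at $x_0(t)$ and integrate by parts back, the boundary contribution is $U_1(B)\phi(B)+U_2(B)\phi_x(B)+U_3(B)\phi_{xx}(B)$, exactly as the paper records. Continuity gives $U_3=0$, and differentiating the continuity identity $u^+(B(t),t)=u^-(B(t),t)$ in $t$ yields $[u_t]=-\dot B[u_x]$, from which $U_2=[u_t]+u^2[u_x]=(u^2-\dot B)[u_x]$; so $U_2=0$ is indeed equivalent to your speed relation. But $U_1$ involves $[u_{tx}]$, $[u_{xx}]$, and $[u_x^2]$ and does \emph{not} reduce to the speed condition alone: the ansatz $u^-=b(t)e^x+d(t)e^{-x}$, $u^+=a(t)e^{-x}$ satisfies the Novikov equation classically on each half-line for \emph{arbitrary} $a,b,d$, so continuity plus $\dot B=u(B)^2$ cannot by themselves force $U_1=0$. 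For the specific branches here one computes (using $u_{xx}^\pm=u^\pm$ and $u_x^+=-u^+$) that $U_1=(u-u_x^-)\bigl(c^2-\tfrac12 u(u+u_x^-)\bigr)$ at $B$, and a short evaluation shows the second factor vanishes. The paper handles this by checking $U_1(B)=U_2(B)=0$ directly (noting it is ``not obvious, but easy to check''); you should either do the same or supply an argument for why $U_1=0$ follows, rather than asserting that a single Rankine--Hugoniot relation suffices.
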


\begin{proof}
Let us examine the expression inside the modulus signs in \(u_5\). This expression is increasing in \(x\), and has the only root \(x = \frac{1}{2} \textnormal{ln} (e^{2c^2t} - 2\varepsilon)\). Thus, there can exist a value of \(x\) for which the expression changes sign, but only when \(t > t_0: = \frac{1}{2c^2}\textnormal{ln}(2\varepsilon)\). Before time \(t_0\), the function \(u_5\) is smooth, and is thus a solution of the Novikov equation in the usual sense. At the time \(t_0\) a peak (a point where the left and right derivatives are unequal) is created at \(x = -\infty\), which then moves in rapidly from the left. 

More concretely, for \(t \leq t_0\), the expression (\ref{NovikovTransformedPeakonB}) simplifies significantly, since
\[u_5(x,t) = c\sqrt{1 + 2\varepsilon e^{-2x}} e^{-\frac{1}{2} \textnormal{ln}\left(e^{2x} + 2\varepsilon\right) + c^2t} = c\frac{\sqrt{1 + 2\varepsilon e^{-2x}}}{\sqrt{e^{2x} + 2\varepsilon}} e^{c^2t} = ce^{-x + c^2t}.\]
For \(t > t_0\), one can simplify in a similar manner, depending on whether one is to the left or to the right of the peak at \(B(t) := \frac{1}{2} \textnormal{ln} (e^{2c^2t} - 2\varepsilon)\), yielding
\begin{gather}
u_5(x,t) = 
\begin{cases}
ce^{-x + c^2t}, &x \geq B(t) \\
c(e^x + 2\varepsilon e^{-x}) e^{-c^2t}. &x \leq B(t)
\end{cases}
\label{NovikovSolutonU5}
\end{gather}
To check that a function is still a weak solution after time \(t_0\), in the sense of \cite{NovikovPeakons}, one needs to show that
\[\left<\left(1 - \partial_x^2\right)u_t + \left(4 - \partial_x^2\right)\partial_x\left(\frac{1}{3}u^3\right) + \partial_x\left(\frac{3}{2}uu_x^2\right) + \frac{1}{2}u_x^3, \phi\right> = 0, \quad \forall \phi(x) \in C_0^{\infty},\]
where \(\left<\cdot, \cdot\right>\) means action on test functions in the usual sense. Using the definition of distributional derivatives, one gets
\begin{gather}
\left<u_t, \left(1 - \partial_x^2 \right) \phi \right> + \left<\frac{1}{3}u^3, \partial_x\left(\partial_x^2 - 4\right)\phi \right> + \left<\frac{3}{2}uu_x^2, -\partial_x \phi \right> + \left<\frac{1}{2}u_x^3, \phi \right> = 0.  \label{NovikovCheckWeak} \end{gather}
Let \(u^+\) and \(u^-\) be the expressions of (\ref{NovikovSolutonU5}) to the right and left of the peak, respectively. Note that \(u_5(x, t)\) is continuous at all points, with \(u_x\) and \(u_t\) piecewise continuous functions, so the lefthand side in (\ref{NovikovCheckWeak}) equals
\[\begin{split}
&\int_B^{\infty} \! u_t^+ \left(\phi - \phi_{xx}\right) \, dx + \int_{-\infty}^B \! u_t^- \left(\phi - \phi_{xx}\right) \, dx + \int_B^{\infty} \frac{1}{3}\left(u^+\right)^3 \left(\phi_{xxx} - 4\phi_x\right) \, dx \, + \\ 
&\quad+ \int_{-\infty}^B \frac{1}{3}\left(u^-\right)^3  \left(\phi_{xxx} - 4\phi_x\right) \, dx  + \int_B^{\infty} \frac{3}{2}u^+ \left(u_x^+\right)^2 (-\phi_x) \, dx \\ &\quad+ \int_{-\infty}^B \frac{3}{2}u^- \left(u_x^-\right)^2 (-\phi_x) \, dx \, +
+ \int_B^{\infty} \frac{1}{2}\left(u_x^+\right)^3 \phi \, dx + \int_{-\infty}^B \frac{1}{2}\left(u_x^-\right)^3 \phi \, dx.
\end{split}\]
Using integration by parts to move the derivatives back to \(u\), we get two kinds of terms. First we again get integrals, which combine to zero since \(u\) is a strong solution of the Novikov equation on each interval. The boundary values at infinity are all zero, since we integrate against a test function with compact support, but we also get boundary values at \(B\):
\begin{gather} U_1(B)\phi(B) + U_2(B)\phi_x(B) + U_3(B)\phi_{xx}(B), \end{gather}
where we use the shorthand notation \(f(B) = f(B(t), t)\), and
\begin{align*} U_1(B) :=& \left(u_t^-\right)_x(B) - \left(u_t^+\right)_x(B) + \frac{1}{3}\left(\left(u^-\right)^3\right)_{xx}(B) - \frac{1}{3}\left(\left(u^+\right)^3\right)_{xx}(B)\\
& + \frac{3}{2}u^+(B)(u_x^+(B))^2 - \frac{3}{2}u^-(B)(u_x^-(B))^2 + \frac{4}{3}(u^+)^3(B) - \frac{4}{3}(u^-)^3(B), \\
U_2(B) :=& u_t^+(B) - u_t^-(B) + \frac{1}{3}\left(\left(u^+\right)^3\right)_x(B) - \frac{1}{3}\left(\left(u^-\right)^3\right)_x(B), \\
U_3(B) :=& \frac{1}{3}\left(\left(u^-\right)^3\right)(B) - \frac{1}{3}\left(\left(u^+\right)^3\right)(B). \end{align*}
The continuity of \(u_5\) gives \(u^+(B) = u^-(B)\) which means that \(U_3(B)\) is zero. It is not obvious, but easy to check with computer, that \(U_1(B)\) and \(U_2(B)\) are also zero. For example,
\begin{equation}
\left(u_t + \frac{1}{3}\left(u^3\right)_x\right)\left(B\right) = \frac{-2\varepsilon c^3 e^{c^2t}}{\left(e^{2c^2t} - 2\varepsilon\right)^{\frac{3}{2}}}
\end{equation}
for both \(u^+\) and \(u^-\), showing that \(U_2(B) = 0\).
\end{proof}

Note that as the peak moves in from the left, it is not actually a local maximum from the start (so it might be more accurate to call it a corner), as we can see from Figure \ref{fig:PeakonCreation1}. As time increases the corner really turns into a peak, indicated in Figure \ref{fig:PeakonCreation2}. The peak becomes increasingly separated from the large wave to the left, and one can see from the expression for \(B(t)\) that, asymptotically, the peak moves to the right with constant speed \(c^2t\) like a one-peakon solution, unaffected by the wavefront. Figure \ref{fig:PeakonCreation3} shows how the peak moves in space-time.

\begin{figure}
\begin{center}
\begin{tikzpicture}
% t = 0.2, epsilon = 1/2 => peak at 1/2*ln(exp 0.3 - 1) ~= -0.355
\draw[->] (-1.5, 0) -- (4.1, 0) node[right] {\(x\)}; 
\draw[->] (0, -0.5) -- (0, 2.2) node[above] {\(u(x, t)\)};
\draw [domain = -1.3 : -0.355] plot (\x, {(exp(\x) +  exp(-\x))*exp(-0.2))}); 
\draw [domain = -0.355 : 4.0] plot (\x, {exp(-\x + 0.2)}); 
\end{tikzpicture}
\caption{Wave profile of \(u_5\), shortly after the time of creation}
\label{fig:PeakonCreation1}
\end{center}
\end{figure}
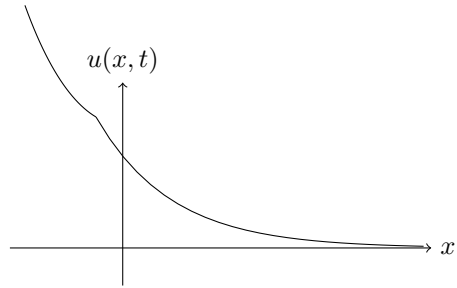

\begin{figure}
\begin{center}
\begin{tikzpicture}[scale=0.80]
% t = 0.8, epsilon = 1/2 => peak at 1/2*ln(exp 1.6 - 1) ~= 0.687
\draw[->] (-2, 0) -- (4.1, 0) node[right] {\(x\)}; 
\draw[->] (0, -0.5) -- (0, 1.5) node[above] {\(u(x, t)\)};
\draw [domain = -2.0 : 0.687] plot (\x, {(exp(\x) +  exp(-\x))*exp(-0.8))}); 
\draw [domain = 0.687 : 4.0] plot (\x, {exp(-\x + 0.8)}); 
\end{tikzpicture} \,\,\,\,\,
\begin{tikzpicture}[scale=0.80]
% t = 2.8, epsilon = 1/2 => peak at 1/2*ln(exp 3.6 - 1) ~= 2.798
\draw[->] (-4, 0) -- (4.1, 0) node[right] {\(x\)}; 
\draw[->] (0, -0.5) -- (0, 1.5) node[above] {\(u(x, t)\)};
\draw [domain = -4.0 : 2.798] plot (\x, {(exp(\x) +  exp(-\x))*exp(-2.8))}); 
\draw [domain = 2.798 : 4.0] plot (\x, {exp(-\x + 2.8)}); 
\end{tikzpicture}
\caption{Wave profile of \(u_5\), snapshots at two different later times}
\label{fig:PeakonCreation2}
\end{center}
\end{figure}
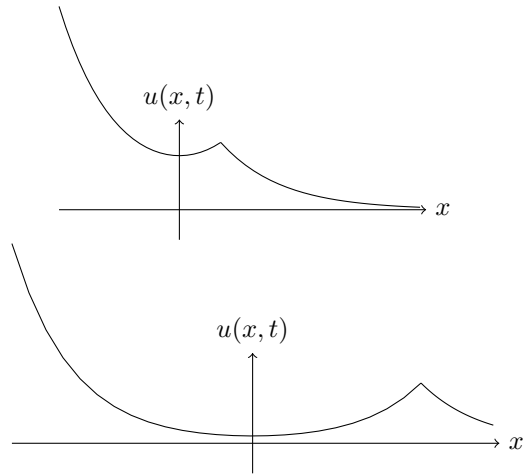

\begin{figure}
\begin{center}
\begin{tikzpicture}
\draw[->] (-2.5, 0) -- (2.5, 0) node[right] {\(x\)}; 
\draw[->] (0, -0.5) -- (0, 2.5) node[above] {\(t\)};
\draw[dashed, gray] (-2.5, 0.3) -- (2.5, 0.3);
\draw [domain = 0.34 : 2.5] plot ({ ln((exp(\x - 0.3) -  exp(-\x + 0.3))))}, \x);
\draw (1.9, 2) node[right] {\(x = B(t)\)}; 
\draw (-0.1, 0.3) -- (0.1, 0.3) node[above right] {\(t_0\)};
\end{tikzpicture}
\caption{Movement of the peak in space-time}
\label{fig:PeakonCreation3}
\end{center}
\end{figure}
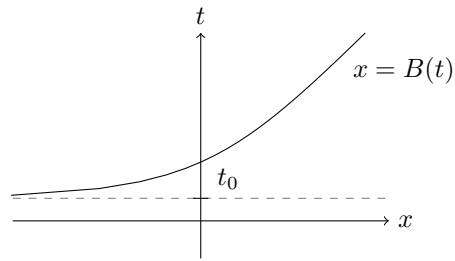

Let us also briefly consider the function \(u_4(x,t)\). By modifying the argument above, one gets that this function also has a peak, but \emph{before} a certain (finite) time, at which the position of the peak goes to \(+ \infty\). One can also check that \(u_4\) is a weak solution until the peak is destroyed, after which it is a regular solution to the Novikov equation.

Finally, let us mention what happens if one combines the transformations above. Applying transformation 5 with parameter \(\varepsilon\), followed by transformation 4 with parameter \(\delta\), gives the following function:
\[\tilde{u} = c \sqrt{1 + 2\delta e^{2x}} \sqrt{1 + 2\varepsilon(e^{-2x} + 2\delta)} e^{-\left|\frac{1}{2}\textnormal{ln}\left(\frac{1}{e^{-2x} + 2\delta} + 2\varepsilon\right) - c^2t\right|}.\]
It turns out that this function has a peak that is both created and destroyed in finite time. The precise interval for which the peak lives is 
\[t \in \left( \frac{1}{2c^2}\textnormal{ln}(2\varepsilon), \frac{1}{2c^2}\textnormal{ln}\Big(2\varepsilon + \frac{1}{2\delta}\Big)\right).\]
Outside this interval, \(\tilde{u}\) is a smooth function of \(x\), and thus a regular solution as before. To find a function for which the peak lives between given times \(t_1\) and \(t_2\), choose
\[\begin{cases}
\varepsilon = \frac{1}{2}e^{2c^2t_1}, \\
\delta = \frac{1}{2\left(e^{2c^2t_2} - e^{2c^2t_1}\right)}, \end{cases}
\quad t_1 < t_2.\]

\section{Peakon Creation in Related Equations}

Finding unbounded solutions with peakon creation in the Novikov equation inspires us to look for solutions with similar behaviour in the related Camassa--Holm and Degasperis--Procesi equations.

\subsection{Camassa--Holm solutions with peakon creation}

The Camassa--Holm equation (CH), from \cite{CHOriginal}, is given by
\begin{equation}u_t - u_{xxt} + 3uu_x = 2u_xu_{xx} + uu_{xxx}.\label{CH} \end{equation}
It is known from \cite{CHSymmetries} that the CH symmetry group only consists of translations and scalings. This means that we cannot find solutions with peakon creation just by transforming the one-peakon solution. Still, it turns out that there are solutions with peakon creation, that one can find via an ansatz inspired by the Novikov solutions found in the previous section.
\begin{Theorem}
For every \(t \in \mathbb{R}\), the function
\[u(x, t) = \begin{cases}
u^+ = a(t)e^{-x}, &x \geq B(t), \\
u^- = c(t)(e^x + e^{-x}), &x \leq B(t), \end{cases} \]
where 
\begin{gather*}
a(t) = U\textnormal{cosh}[U(t - t_0)], \\
B(t) = \textnormal{ln}(\textnormal{sinh}[U(t - t_0)]), \\
c(t) = \frac{U}{\textnormal{cosh}[U(t - t_0)]},
\end{gather*}
is a solution to the Camassa--Holm equation.
\end{Theorem}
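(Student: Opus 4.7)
The plan mirrors the analysis of $u_5$ in \hyperref[NovikovPeakonCreationSolution]{Theorem \ref{NovikovPeakonCreationSolution}}, but is simplified by a pleasant algebraic observation. Both branches satisfy $u_{xx} = u$: immediate for $u^+ = a(t)e^{-x}$, and for $u^- = c(t)(e^x + e^{-x})$ because $\cosh'' = \cosh$. Hence on each open piece $u_{xxt} = u_t$ and $u_{xxx} = u_x$, so the CH nonlinearity collapses identically,
\[3uu_x - 2u_x u_{xx} - u u_{xxx} = 3uu_x - 2uu_x - uu_x = 0.\]
Consequently, for \emph{any} choice of $a(t)$ and $c(t)$, the functions $u^+$ and $u^-$ are classical solutions of (\ref{CH}) on $x > B(t)$ and $x < B(t)$ respectively. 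As a preliminary consistency check I would verify $u^+(B,t) = u^-(B,t)$: setting $\tau := U(t-t_0)$ one has $e^B = \sinh\tau$, so both expressions evaluate to $U\coth\tau$ via $1 + \sinh^2\tau = \cosh^2\tau$.

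The next step is to test $u$ against $\phi \in C_0^\infty$ in a weak form of CH obtained by pushing all spatial derivatives onto the test function (in parallel with (\ref{NovikovCheckWeak})), for instance
\[\langle u_t,\, \phi - \phi_{xx}\rangle - \langle \tfrac{3}{2} u^2,\, \phi_x\rangle - \langle \tfrac{1}{2} u_x^2,\, \phi_x\rangle + \langle \tfrac{1}{2} u^2,\, \phi_{xxx}\rangle = 0.\]
I would split every pairing at $x = B(t)$ into integrals over $(-\infty, B)$ and $(B, \infty)$, integrate by parts to transfer the derivatives back to $u^\pm$, and collect the output. By the first step, the interior integrands vanish identically, so only boundary contributions at $B(t)$ remain. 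They take the shape
\[V_0(B)\,\phi(B) + V_1(B)\,\phi_x(B) + V_2(B)\,\phi_{xx}(B),\]
where each $V_j$ is an explicit algebraic combination of one-sided limits of $u$, $u_x$, $u_t$, $u^2$, $u_x^2$ at $B(t)$; since $\phi$ is arbitrary, the goal becomes $V_0 = V_1 = V_2 = 0$ for all $t$.

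Continuity of $u$ from Step~1 kills $V_2$ immediately. For $V_0$ and $V_1$ I would insert the explicit formulas for $a, c, B$ and reduce to hyperbolic identities; the relevant kinematic fact is $\dot B(t) = U\coth\tau = u(B,t)$, so the corner travels at its own height, which is the standard CH peakon transport rule and is essentially what $V_1$ encodes. This is where the specific coefficients get used, as opposed to Step~1 where they were free. The main obstacle is the bookkeeping of these boundary coefficients: there are several terms, the one-sided limits $u_t^\pm(B,t)$ each carry contributions from $\dot a$ and $\dot c$, and sign errors are easy when transferring derivatives across a moving interface. A short computer-algebra check, as suggested in the Novikov proof for the analogous $U_1(B), U_2(B)$, would be the efficient way to close the argument.
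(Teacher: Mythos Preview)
Your proposal is correct and follows the same skeleton as the paper: write down the weak formulation of (\ref{CH}), observe that $u^\pm$ are strong solutions on their respective half-lines, integrate by parts, and reduce everything to the vanishing of three boundary quantities at $x=B(t)$, one of which is killed by continuity.

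The genuine difference is in how the remaining two boundary conditions are handled. You treat the formulas for $a,B,c$ as given and propose to \emph{verify} that $V_0$ and $V_1$ vanish by plugging in and reducing to hyperbolic identities (or by computer), guided by the observation $\dot B = u(B,t)$. The paper instead leaves $a,B,c$ free at this stage and \emph{derives} them: the two nontrivial boundary conditions, after eliminating $a$ via continuity, become the ODE system
\[
\frac{d}{dt}\bigl(ce^B\bigr)=c^2,\qquad \frac{dB}{dt}=c\bigl(e^B+e^{-B}\bigr),
\]
which is then integrated via the substitution $G=ce^B$, $K=c^{-2}$ to produce the stated $\cosh/\sinh$ expressions. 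Your route is shorter and perfectly adequate for the theorem as stated; the paper's route is constructive and explains where the formulas come from (and shows, incidentally, that this ansatz yields no other solutions). Your explicit remark that $u_{xx}=u$ on both branches collapses the CH nonlinearity is cleaner than the paper's bare assertion that each piece is a strong solution.
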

(Note that for \(t \leq t_0\), \(B(t)\) is undefined, so we take \(u(x,t) = u^+\) for those times.)
\begin{proof}
We look for weak solutions of the kind
\begin{equation}
u(x, t) = \begin{cases}
u^+ = a(t)e^{-x}, &x \geq B(t), \\
u^- = c(t)(e^x + e^{-x}), &x \leq B(t), \end{cases}
\end{equation}
where \(a(t)\) and \(c(t)\) are positive continuous functions, chosen in such a way that \(u\) is continuous at the peak \(B(t)\) for all times. From the weak formulation of the Camassa--Holm equation found in \cite{NovikovPeakons}, one has that \(u\) must satisfy
\begin{equation}
\left< \left(1 - \partial_x^2\right)u_t + \left(3 - \partial_x^2\right)\partial_x\left(\frac{1}{2}u^2\right) + \partial_x\left(\frac{1}{2}u_x^2\right) , \phi \right> = 0
\end{equation}
for all test functions \(\phi(x) \in C_0^{\infty}\). Note that the function \(u(x, t)\) is a strong solution of (\ref{CH}) on each interval. Thus integration by parts, as in the previous section, gives that
\begin{gather*} U_1(B)\phi(B) + U_2(B)\phi_x(B) + U_3(B)\phi_{xx}(B) = 0 \end{gather*}
must be satisfied, where
\begin{subequations}
\begin{align} U_1(B) &:= \left(u_t^-\right)_x(B) - \left(u_t^+\right)_x(B) + \frac{1}{2}\left(\left(u^-\right)^2\right)_{xx}(B) - \frac{1}{2}\left(\left(u^+\right)^2\right)_{xx}(B) + \nonumber \\
&+ \frac{1}{2}\left(u_x^+(B)\right)^2 - \frac{1}{2}\left(u_x^-(B)\right)^2 + \frac{3}{2}\left(\left(u^+\right)^2\right)(B) - \frac{3}{2}\left(\left(u^-\right)^2\right)(B), \label{CHU1} \\
U_2(B) &:= u_t^+(B) - u_t^-(B) + \frac{1}{2}\left(\left(u^+\right)^2\right)_x(B) - \frac{1}{2}\left(\left(u^-\right)^2\right)_x(B), \label{CHU2} \\
U_3(B) &:= \frac{1}{2}\left(\left(u^-\right)^2\right)(B) - \frac{1}{2}\left(\left(u^+\right)^2\right)(B). \label{CHU3} \end{align} \end{subequations}
The condition (\ref{CHU3}) \(= 0\) is met because of continuity. Using continuity, we can also express \(a(t)\) in terms of \(B\) and \(c\), since
\[c\left(e^{-B} + e^B\right) = ae^{-B} \implies a = c\left(1 + e^{2B}\right) \implies \frac{da}{dt} = \frac{dc}{dt}\left(1 + e^{2B}\right) + 2\frac{dB}{dt}ce^{2B}.\]
Eliminating \(a\) and its time derivative in the conditions (\ref{CHU1}) \(=\) (\ref{CHU2}) \(= 0\) gives the system
\begin{subequations}
\label{CHODESystem}
\begin{gather}
\frac{d}{dt}\left(ce^B \right) = c^2, \\
\frac{dB}{dt} = c\left(e^B + e^{-B}\right).
\end{gather}
\end{subequations}
 These conditions are simplified by a change of variables,
\[\begin{cases}
G(t) = c(t)e^{B(t)}, \\
K(t) = \frac{1}{c^2(t)}, \end{cases} \implies
\begin{cases}
\frac{dG}{dt} = c^2 = \frac{1}{K}, \\
\frac{dK}{dt} = \frac{-2}{c^3} \frac{dc}{dt} = 2KG, \end{cases} \]
where the last line follows from the observation that
\[\frac{dc}{dt} = \frac{d}{dt}\left( \frac{G}{e^B} \right) = \frac{c^2}{e^B} - \frac{Gc\left(e^B + e^{-B}\right)e^B}{e^{2B}} = -c^2e^B = -cG.\]
One can now get a separable differential equation and find a constant of motion:
\[\frac{dK}{dG} = \frac{\frac{dK}{dt}}{\frac{dG}{dt}} = 2K^2G \implies \int \frac{dK}{K^2} = \int 2G \, dG \implies -\frac{1}{K} = G^2 + \textnormal{constant}.\]
Apart from the trivial solution \(a(t) = c(t) = 0\), \(G\) and \(\frac{1}{K}\) are positive, so the constant has to be negative.  Let the constant be named \(-U^2\) for convenience. Then
\[\frac{dG}{dt} = \frac{1}{K} = U^2 - G^2 \implies \int \frac{dG}{U^2 - G^2} = \int dt\]
\[\implies \frac{1}{2U}\int \left( \frac{1}{U + G} + \frac{1}{U - G} \right) dU = \int dt \implies \frac{1}{2U} \, \textnormal{ln} \left( \frac{U + G}{U - G} \right) = t - t_0\]
\[\implies G = U\frac{e^{2U(t - t_0)} - 1}{e^{2U(t - t_0)} + 1} = U \textnormal{tanh}[U(t - t_0)].\]
From this one gets \(K\) as
\[K = \frac{1}{U^2 - G^2} = \frac{1}{U^2}\cdot\frac{1}{1 - \textnormal{tanh}^2[U(t - t_0)]} = \frac{\textnormal{cosh}^2[U(t - t_0)]}{U^2},\]
which gives expressions for \(c(t)\), \(B(t)\), and consequently \(a(t)\):
\begin{gather*}
c(t) = \frac{1}{\sqrt{K}} = \frac{U}{\textnormal{cosh}[U(t - t_0)]}, \\
B(t) = \textnormal{ln}\left(G\sqrt{K}\right) = \textnormal{ln}(\textnormal{sinh}[U(t - t_0)]), \\
a(t) = c(t)\left(1 + e^{2B(t)}\right) = \frac{U}{\textnormal{cosh}[U(t - t_0)]}\left(1 + \textnormal{sinh}^2[U(t - t_0)]\right) = U\textnormal{cosh}[U(t - t_0)].
\end{gather*}
\end{proof}
We note that our new solution behaves similarly to the Novikov solution with peakon creation in \hyperref[NovikovPeakonCreationSolution]{Theorem \ref{NovikovPeakonCreationSolution}}. Up to time \(t_0\), the expression for \(B(t)\) is undefined, so the function is a strong solution to the Camassa--Holm equation. At \(t_0\) a peak is created at \(x = -\infty\), which then moves rapidly in from the left. Note that the exact time dependencies are not the same as for the Novikov peakon-creation solution, even though the qualitative behaviour is the same.

\subsection{Degasperis--Procesi solutions with peakon creation?}

The Degasperis--Procesi (DP) equation \cite{DPOriginal} is given by
\begin{equation}u_t - u_{xxt} + 4uu_x = 3u_xu_{xx} + uu_{xxx}.\label{DP} \end{equation}
Like Camassa--Holm, it only has scaling and translation symmetries \cite{DPSymmetries}, so we try to find peakon-creation solutions using the same method as in the last section. 

\begin{Theorem}
For every \(t \in \mathbb{R}\), the function
\[u(x, t) = \begin{cases}
u^+ = a(t)e^{-x}, &x \geq B(t), \\
u^- = c(t)(e^x + e^{-x}), &x \leq B(t), \end{cases}\]
where 
\begin{gather*}
a(t) = \sqrt{\frac{C_1}{C_0}} \left(\frac{1 + C_0C_1e^{2Ut}}{e^{Ut} + \frac{e^{-Ut}}{UC_0}}\right), \\
B(t) = \textnormal{ln}\sqrt{C_0C_1} + Ut, \\
c(t) = \sqrt{\frac{C_1}{C_0}} \frac{1}{e^{Ut} + \frac{e^{-Ut}}{UC_0}},
\end{gather*}
is a solution to the Degasperis--Procesi equation.
\end{Theorem}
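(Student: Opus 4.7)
The plan is to reproduce the argument of the Camassa--Holm theorem above with the weak formulation adapted to the Degasperis--Procesi equation. Using the identity $3u_xu_{xx} + uu_{xxx} = \tfrac{1}{2}(u^2)_{xxx}$, equation \eqref{DP} rewrites as $(1-\partial_x^2)u_t + (4-\partial_x^2)\partial_x(\tfrac{1}{2}u^2) = 0$, so the relevant weak form reads
\begin{equation*}
\bigl\langle u_t, (1-\partial_x^2)\phi\bigr\rangle + \bigl\langle \tfrac{1}{2}u^2, \partial_x(\partial_x^2 - 4)\phi\bigr\rangle = 0, \qquad \phi \in C_0^\infty.
\end{equation*}
A direct computation shows that both $u^+ = a(t)e^{-x}$ and $u^- = c(t)(e^x + e^{-x})$ solve \eqref{DP} in the strong sense on $x > B(t)$ and $x < B(t)$ respectively, for any smooth $a(t), c(t)$: indeed, each of $u_t - u_{xxt}$ and $4uu_x - 3u_xu_{xx} - uu_{xxx}$ vanishes identically on these profiles. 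Hence the entire content of the theorem is carried by the matching at $x = B(t)$.

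The next step is to integrate by parts, pushing every derivative from $\phi$ back onto $u$. The bulk integrals vanish on each side by the strong-solution property, leaving a boundary contribution at $B(t)$ of the form $U_1(B)\phi(B) + U_2(B)\phi_x(B) + U_3(B)\phi_{xx}(B)$. These $U_i$ have the same structure as their Camassa--Holm counterparts, with two modifications: the coefficient $\tfrac{3}{2}$ on the $(u^\pm)^2(B)$ terms in $U_1$ becomes $2$, and the $\tfrac{1}{2}(u_x^\pm)^2(B)$ contributions (which came from the CH-specific $\partial_x(\tfrac{1}{2}u_x^2)$ term) disappear. Continuity of $u$ forces $U_3(B) = 0$ together with $a = c(e^{2B}+1)$; plugging the ansatz into $U_2(B) = 0$ again yields $\dot B = c(e^B + e^{-B})$, while $U_1(B) = 0$ now gives $\tfrac{d}{dt}(ce^B) = 2c^2$ --- the CH system with its first right-hand side doubled.

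The decisive step is integrating this modified system. Differentiating $\dot B = c(e^B + e^{-B})$ in $t$ and using $\tfrac{d}{dt}(ce^B) = 2c^2$ to eliminate $\dot c$ produces the clean identity $\ddot B \equiv 0$, so $\dot B = U$ is a constant of motion and $B(t) = B_0 + Ut$ is defined for every $t \in \mathbb{R}$. This is exactly the qualitative contrast with the Camassa--Holm case, where $B(t) = \ln\sinh[U(t - t_0)]$ is undefined for $t < t_0$; in the DP setting the peak persists for all times. From there, $c = U/(e^B + e^{-B})$ follows from the second ODE and $a = c(e^{2B}+1)$ from continuity, and rewriting the integration constants in terms of $C_0, C_1, U$ as in the statement is an algebraic exercise. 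The main obstacle is not the modified coefficient in the ODE itself but the clean cancellation that produces $\ddot B = 0$; once that is in hand the qualitative behavior claimed by the theorem is immediate.
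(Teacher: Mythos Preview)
Your argument tracks the paper's exactly through the weak formulation, the integration-by-parts reduction to boundary conditions $U_1=U_2=U_3=0$, and the resulting ODE system
\[
\frac{d}{dt}\bigl(ce^{B}\bigr)=2c^{2},\qquad \dot B = c\bigl(e^{B}+e^{-B}\bigr).
\]
Where you diverge is in the integration of this system. The paper sets $G=ce^{B}$, $K=e^{B}/c$, obtains $\dot G = 2G/K$, $\dot K = 2GK$, extracts the first integral $G+1/K=U$ from $dK/dG=K^{2}$, solves a linear ODE for $K$, and only after assembling $e^{B}=\sqrt{GK}$ does the linearity $B(t)=\ln\sqrt{C_{0}C_{1}}+Ut$ emerge. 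Your route---differentiating $\dot B$ and eliminating $\dot c$ via the first equation to get $\ddot B=0$---is shorter and makes the qualitative punchline (the peak exists for all $t$, unlike the Camassa--Holm case) immediate rather than a by-product of a longer calculation. The cancellation you flag is genuine: writing $\dot c = 2c^{2}e^{-B}-c\dot B$ and substituting into $\ddot B = \dot c(e^{B}+e^{-B})+c(e^{B}-e^{-B})\dot B$ collapses everything to zero.

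One caveat on your final sentence. Your integration naturally produces a \emph{two}-parameter family $(U,B_{0})$, which is the correct count for this first-order planar system. The statement, however, displays three constants $C_{0},C_{1},U$. In the paper's derivation $C_{1}$ enters when $G$ is re-solved from $\dot G=2G/K$ after $K$ is already fixed, but the first integral $G+1/K=U$ pins $G$ uniquely; a direct check of $\dot K=2GK$ on the stated formulas shows they only satisfy it when $C_{1}=U$. So ``rewriting the integration constants in terms of $C_{0},C_{1},U$'' is not a free relabelling: the three-parameter form is redundant, and your two-parameter family already captures all genuine solutions.
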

\begin{proof}
We look for weak solutions
\[u(x, t) = \begin{cases}
a(t)e^{-x}, &x \geq B(t), \\
c(t)\left(e^x + e^{-x}\right), &x \leq B(t), \end{cases}\]
where \(a(t)\) and \(c(t)\) are positive continuous functions, such that \(u\) is continuous at the peak \(B(t)\) for all times. We stick to the weak formulation given in \cite{NovikovPeakons}, i.e., \(u(x,t)\) must satisfy
\[\left< \left(1 - \partial_x^2\right)u_t + \left(4 - \partial_x^2\right)\partial_x\left(\frac{1}{2}u^2\right), \phi \right> = 0.\]
As before, \(u_t\) is piecewise continuous, so via integration by parts we find three conditions on \(u^+\) and \(u^-\) at the peak, one of which is satisfied because of continuity. Eliminating \(a(t)\), we end up with a system similar to (\ref{CHODESystem}), but not the same:
\begin{gather*}
\frac{d}{dt}\left(ce^B \right) = 2c^2, \\
\frac{dB}{dt} = c\left(e^B + e^{-B}\right).
\end{gather*}
With \(G(t) = c(t)e^{B(t)}\), \(K(t) = \frac{e^{B(t)}}{c(t)}\), we get
\begin{gather*}
\frac{dG}{dt} = \frac{2G}{K}, \\
\frac{dK}{dt} = 2GK.
\end{gather*}
Using the same method as before, we find a relation between \(K\) and \(G\):
\[\frac{dK}{dG} = \frac{\frac{dK}{dt}}{\frac{dG}{dt}} = K^2 \implies \int \frac{dK}{K^2} = \int \, dG \implies -\frac{1}{K} = G + \textnormal{constant}.\]
Let the constant be named \(-U\). Since \(G\) and \(\frac{1}{K}\) are nonnegative, \(U = 0\) only gives the trivial solution \(a(t) = c(t) = 0\). Assume \(U \neq 0\). Then
\[\frac{dK}{dt} = 2GK = 2K\left(U - \frac{1}{K}\right) \implies \frac{dK}{dt} - 2KU = -2,\]
which has the general solution
\[K = C_0e^{2Ut} + \frac{1}{U}.\]
This gives \(G(t)\) via
\[\frac{dG}{dt} = \frac{2G}{K} = \frac{2G}{C_0e^{2Ut} + \frac{1}{U}} \implies G = \frac{C_1}{\frac{e^{-2Ut}}{UC_0} + 1},\]
so we get
\[e^{B(t)} = \sqrt{GK} = \sqrt{C_1}\sqrt{\frac{C_0e^{2Ut} + \frac{1}{U}}{\frac{e^{-2Ut}}{UC_0} + 1}} = \sqrt{C_0C_1e^{2Ut}} = \sqrt{C_0C_1}e^{Ut}\]
\[\implies B(t) = \textnormal{ln}\sqrt{C_0C_1} + Ut,\]
and 
\[c(t) = \sqrt{\frac{G}{K}} = \sqrt{\frac{C_1}{\left(\frac{e^{-2Ut}}{UC_0} + 1\right)\left(C_0e^{2Ut} + \frac{1}{U}\right)}} =\]
\[= \sqrt{\frac{C_1}{C_0e^{2Ut}\left(1 + \frac{e^{-2Ut}}{UC_0}\right)^2}} = \sqrt{\frac{C_1}{C_0}} \frac{1}{e^{Ut} + \frac{e^{-Ut}}{UC_0}}.\]
This gives
\[a(t) = c(t)\left(1 + e^{2B(t)}\right) = \sqrt{\frac{C_1}{C_0}} \left(\frac{1 + C_0C_1e^{2Ut}}{e^{Ut} + \frac{e^{-Ut}}{UC_0}}\right).\]
\end{proof}

Note that \(B(t)\) here is defined for all times, so there is no peakon creation in this solution. We have found an unbounded piece-wise defined solution though. It is possible that a more general ansatz yields a solution with peakon creation in the DP case. It would also be interesting to investigate if one can find a solution with creation of so-called shockpeakons \cite{DPShockpeakons}.

\appendix

\renewcommand\thesection{\Alph{section}}

\section{Lie Symmetries}

In this appendix we use the framework of symmetry groups, due to Lie, to construct transformations taking solutions of the Novikov equation (\ref{eq:Novikov}) to other solutions. Similar results have been presented for the related Camassa--Holm equation in \cite{CHSymmetries} and more recently for the Degasperis--Procesi equation in \cite{DPSymmetries}. Note that computation of symmetry groups is quite cumbersome, so to find them explicitly, the Jets package in Maple is used. For more information on the Jets algorithm and how to use the package, see \cite{JetsAlgorithm} and \cite{JetsGuide} respectively.

\subsection{Definitions}

Herein we will mainly use the notation employed in Olver's book \cite{Olver}, which also contains all details and proofs omitted in this section.

Let \(X = \{\bar x= \left(x^1, \dots, x^p\right)\}\) and \(U = \{\bar u = \left(u^1, \dots, u^q\right)\}\) be the spaces of independent and dependent variables, respectively, involved in a system of differential equations. The \emph{n-th prolongation} of a scalar function \(u\) is defined as a tuple, denoted \(u^{(n)}\), containing \(u\) and all its derivatives up to order \(n\), where derivatives are arranged by order and then lexicographically. For example, with independent variables \(x^1 = x, x^2 = t\) one gets \(u^{(2)} = \left(u, u_x, u_t, u_{xx}, u_{xt}, u_{tt}\right)\). Furthermore, we define for vector-valued functions
\[\bar u^{(n)} = \left((u^1)^{(n)}, \dots, (u^q)^{(n)}\right),\]
and set \(U^{(n)} = \{\bar u^{(n)} \mid \bar u \in U\}\).

An \(n\)-th order system of differential equations can then be given as
\begin{equation}
\Delta_r\left(\bar x, \bar u^{(n)}\right) = 0, \quad r = 1, \dots, l, \label{DeltaSystem}
\end{equation}
where the system has \emph{maximal rank} if the Jacobian \(J_\Delta \left(\bar x, \bar u^{(n)}\right)\) has rank~\(l\) for all points \(\left(\bar x, \bar u^{(n)}\right)\) that are solutions to the system.

If \(G\) is a local group of transformations on \(M \subset X \times U\) and \(g \in G\), one defines the prolonged action \(g^{(n)}\) on a point \(\left(\bar x, \bar u^{(n)}\right) \in M^{(n)} \subset X \times U^{(n)}\) as transforming \(\bar x\) and \(\bar u\), and then re-evaluating derivatives. What we are looking for are \emph{symmetry groups}, i.e., local groups of transformations on \(M\) such that their prolongations take solutions of the system (\ref{DeltaSystem}) to other solutions. 

To a one-parameter group \(G\) there corresponds an \emph{infinitesimal generator} \(\textbf{v}\), which is a vector field defined on \(M\), with the property that orbits of the group action are maximal integral curves of \(\textbf{v}\). Similarly, to an \(m\)-parameter group there corresponds a set of \(m\) infinitesimal generators \(\textbf{v}_1, \dots, \textbf{v}_m\), which has the property that it is closed under taking Lie bracket, and that each infinitesimal generator corresponds to a generator of the group \(G\). 

We define the prolongation of an infinitesimal generator \(\textbf{v}\) of a group \(G\) to be the vector field \(\textbf{v}^{(n)}\), defined on \(M^{(n)}\), which is the infinitesimal generator of the group \(G^{(n)} := \{g^{(n)} | g \in G\}\). We want to give a formula for computing \(\textbf{v}^{(n)}\).

Let \(J\) be a multi-index of the form 
\[J = (j_1, \dots, j_k), \quad 1 \leq j_k \leq p, \quad 1 \leq k \leq n,\]
where \(p\) is the number of independent variables. Then one can introduce a compact notation for derivatives as 
\[u^{\alpha}_j =  \frac{\partial u^{\alpha}}{\partial x_j} \quad \textnormal{and} \quad u^{\alpha}_J =  \frac{\partial^k u^{\alpha}}{\partial x_{j_1} \dotsm \partial x_{j_k}},\]
and we shall also use the notation
\[D_j \phi(\bar x, \bar u) = \frac{\partial \phi}{\partial x^j} + \sum_{\alpha = 1}^q u^{\alpha}_j \frac{\partial \phi}{\partial u^{\alpha}}\] 
for total derivatives, and \(D_J = D_{j_1}D_{j_2} \dotsm D_{j_k}\) for multi-indices \(J\).

The following theorem (Theorem 2.36 in \cite{Olver}) gives the general formula for~\(\textbf{v}^{(n)}\):
\begin{Theorem} \label{Th:1}
Let
\begin{equation}
\textnormal{\textbf{v}} = \sum_{i=1}^p \xi^i(\bar x,\bar u) \frac{\partial}{\partial x^i} + \sum_{\alpha = 1}^q \phi_{\alpha}(\bar x,\bar u) \frac{\partial}{\partial u^{\alpha}} \label{VectorFieldDef}
\end{equation}
be a vector field on \(M \subset X \times U\). Then
\begin{equation}
\textnormal{\textbf{v}}^{(n)} = \textnormal{\textbf{v}} + \sum_{\alpha = 1}^q \sum_J \phi_{\alpha}^J \left(\bar x, \bar u^{(n)}\right) \frac{\partial}{\partial u_J^{\alpha}}, \label{VectorFieldProlongation} \end{equation}
where the second sum is over all multi-indices \(J\), and \(\phi_{\alpha}^J\) is given by
\begin{equation}
\phi_{\alpha}^J \left(\bar x, \bar u^{(n)}\right) = D_J \left( \phi_{\alpha} - \sum_{i = 1}^p \xi^i \frac{\partial u^{\alpha}}{\partial x^i} \right) + \sum_{i=1}^p \xi^i \frac{\partial u^{\alpha}_J}{\partial x^i}. \label{VectorFieldComponents} \end{equation}
\end{Theorem}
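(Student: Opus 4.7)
The plan is to compute \(\mathbf{v}^{(n)}\) directly from its characterization as the infinitesimal generator of the prolonged one-parameter group \(g_\epsilon^{(n)}\). By definition
\[
\phi_\alpha^J = \frac{d}{d\epsilon}\bigg|_{\epsilon=0}\tilde{u}^\alpha_J,
\]
where \(\tilde{u}^\alpha_J\) is the \(J\)-th partial derivative of the transformed function \(\tilde{u}^\alpha\) with respect to the transformed coordinates \(\tilde{x}\). So the main task is to expand this quantity to first order in \(\epsilon\), given that infinitesimally \(\tilde{x}^i = x^i + \epsilon\xi^i + O(\epsilon^2)\) and \(\tilde{u}^\alpha = u^\alpha + \epsilon\phi_\alpha + O(\epsilon^2)\).

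The cleanest route is to split \(\mathbf{v}\) into evolutionary and horizontal parts. Introducing the characteristic \(Q^\alpha := \phi_\alpha - \sum_i \xi^i u^\alpha_i\), one can write \(\mathbf{v} = \mathbf{v}_Q + \sum_i \xi^i\partial_{x^i}\) with \(\mathbf{v}_Q = \sum_\alpha Q^\alpha\partial_{u^\alpha}\). The horizontal piece merely drags the graph of \(\bar u\) along \(x\) and therefore contributes \(\sum_i \xi^i\,\partial u^\alpha_J/\partial x^i\) to \(\phi_\alpha^J\), which is exactly the second term of (\ref{VectorFieldComponents}). For the evolutionary piece the flow fixes \(\bar x\), so \(\tilde u^\alpha(\bar x) = u^\alpha(\bar x) + \epsilon Q^\alpha + O(\epsilon^2)\) (since on this flow \(Q^\alpha\) reduces to \(\phi_\alpha\)), and therefore \(\tilde u^\alpha_J(\bar x) = \partial_J u^\alpha + \epsilon\,\partial_J Q^\alpha + O(\epsilon^2)\), where \(\partial_J\) denotes iterated partial differentiation in \(\bar x\). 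Since \(\partial_J\) acting on a smooth function of \((\bar x,\bar u^{(n-1)})\) evaluated along the graph of \(u^\alpha\) is by definition the iterated total derivative \(D_J\), differentiating at \(\epsilon = 0\) gives \(\phi_\alpha^J = D_J Q^\alpha\). Summing the two contributions yields (\ref{VectorFieldComponents}).

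A second, more mechanical route is induction on \(|J|\). The base case \(|J| = 0\) is immediate from \(D_\emptyset Q^\alpha + \sum_i \xi^i u^\alpha_i = \phi_\alpha\). For the inductive step one uses \(u^\alpha_{J,j} = D_j u^\alpha_J\) in the old coordinates together with its counterpart \(\tilde u^\alpha_{J,j} = \tilde D_j \tilde u^\alpha_J\) in the new coordinates, and the first-order expansion of the inverse Jacobian \((\partial\tilde x/\partial x)^{-1} = I - \epsilon(\partial\xi/\partial x) + O(\epsilon^2)\), to derive the recursion \(\phi_\alpha^{J,j} = D_j\phi_\alpha^J - \sum_i (D_j\xi^i)\,u^\alpha_{J,i}\); one then verifies by the Leibniz rule that the candidate closed form \(D_J Q^\alpha + \sum_i \xi^i\,\partial u^\alpha_J/\partial x^i\) satisfies this recursion.

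The main obstacle is bookkeeping in the chain rule: the operator \(D_J\) must be kept distinct from a naive iterated partial derivative, and one has to track carefully how the transformed total-derivative operator differs from the untransformed one through the Jacobian of the change of coordinates. Once \(Q^\alpha\) is introduced and \(\mathbf{v}\) is decomposed into evolutionary and horizontal parts, the remaining work reduces to Taylor expansion in \(\epsilon\) and the observation that the horizontal flow contributes only the transport term \(\sum_i \xi^i\,\partial u^\alpha_J/\partial x^i\).
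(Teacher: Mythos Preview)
The paper does not prove this theorem at all: it is quoted verbatim as Theorem~2.36 from Olver~\cite{Olver}, with no proof given. Your proposal reproduces the standard argument from that reference---in fact both arguments you sketch (the evolutionary/total-derivative decomposition via the characteristic \(Q^\alpha\), and the recursion \(\phi_\alpha^{J,j} = D_j\phi_\alpha^J - \sum_i (D_j\xi^i)u^\alpha_{J,i}\)) appear there---so there is nothing to compare on the paper's side.

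One small imprecision worth tightening: the ``horizontal piece'' you isolate is not \(\sum_i \xi^i\,\partial_{x^i}\) acting on \(X\times U\), since \(\xi^i\) depends on \(\bar u\) and its flow does more than reparametrize \(x\). The correct statement is that on jet space one has \(\textnormal{pr}^{(n)}\mathbf{v} = \textnormal{pr}^{(n)}\mathbf{v}_Q + \sum_i \xi^i D_i\), and the second summand contributes \(\sum_i \xi^i u^\alpha_{J,i}\) to the \(\partial_{u^\alpha_J}\)-component because total derivatives are tangent to prolonged graphs, not because the flow is a mere translation in \(x\). Likewise, the parenthetical ``since on this flow \(Q^\alpha\) reduces to \(\phi_\alpha\)'' is backwards: the flow of \(\mathbf{v}_Q\) moves \(u^\alpha\) by \(Q^\alpha\), full stop. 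With those wordings fixed, the argument is sound.
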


The next theorem (Theorem 2.31 in \cite{Olver}) is the main tool for finding symmetry groups:
\begin{Theorem} \label{Th:2}
Suppose
\[\Delta_r\left(\bar x, \bar u^{(n)}\right) = 0, \quad r = 1, \dots, l, \]
is a system of differential equations of maximal rank defined over \(M \subset X \times U\). If \(G\) is a local group of transformations acting on \(M\), with infinitesimal generator \textnormal{\textbf{v}}, and
\[\textnormal{\textbf{v}}^{(n)} \left( \Delta_r\left(\bar x, \bar u^{(n)}\right) \right) = 0, \quad r = 1, \dots, l, \quad whenever \quad \Delta\left(\bar x, \bar u^{(n)}\right) = 0,\]
then \(G\) is a symmetry group of the system.
\end{Theorem}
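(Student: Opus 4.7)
The plan is to establish the standard infinitesimal invariance criterion: the vector field condition encodes that the prolonged flow is tangent to the solution variety, and tangency to a regular submanifold forces the flow to preserve it.

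First, I would translate the conclusion into a geometric statement. Define the solution variety $\mathcal{S}_\Delta := \{(\bar x, \bar u^{(n)}) \in M^{(n)} : \Delta_r(\bar x, \bar u^{(n)}) = 0, \ r = 1, \dots, l\}$. Saying that $G$ is a symmetry group means that for every $g \in G$ sufficiently close to the identity, the prolonged action $g^{(n)}$ sends $\mathcal{S}_\Delta$ into itself. Using the construction of prolongations recalled in Theorem \ref{Th:1}, this is equivalent to $g$ carrying local solutions to local solutions: if $u$ solves the system then its graph $\Gamma_u^{(n)}$ lies in $\mathcal{S}_\Delta$, and if $g^{(n)}(\Gamma_u^{(n)}) \subset \mathcal{S}_\Delta$ then the transformed graph (which, for $g$ near the identity, is still the graph of some smooth $\tilde u$) has prolongation in $\mathcal{S}_\Delta$, so $\tilde u$ is again a solution.

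Next, I would exploit the maximal rank hypothesis to make $\mathcal{S}_\Delta$ geometrically well-behaved. Since the Jacobian $J_\Delta$ has full rank $l$ at every point of $\mathcal{S}_\Delta$, the implicit function theorem yields that $\mathcal{S}_\Delta$ is a regular embedded submanifold of $M^{(n)}$ of codimension $l$, with tangent space at any $p \in \mathcal{S}_\Delta$ equal to $\bigcap_{r=1}^l \ker d\Delta_r|_p$. The hypothesis $\textbf{v}^{(n)}(\Delta_r) = 0$ whenever $\Delta = 0$ says precisely that $d\Delta_r|_p\bigl(\textbf{v}^{(n)}|_p\bigr) = 0$ for each $r$ and each $p \in \mathcal{S}_\Delta$, i.e., $\textbf{v}^{(n)}$ is everywhere tangent to $\mathcal{S}_\Delta$. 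I would then invoke the standard fact that the flow of a vector field tangent to a regular submanifold preserves that submanifold: viewing $t \mapsto \exp(t\,\textbf{v}^{(n)})(p_0)$ as an ODE whose right-hand side lies in $T\mathcal{S}_\Delta$, uniqueness of solutions forces the integral curve to remain in $\mathcal{S}_\Delta$. Since $G^{(n)}$ is generated by the flow of $\textbf{v}^{(n)}$, each $g^{(n)}$ with $g$ near the identity preserves $\mathcal{S}_\Delta$; for a general $m$-parameter local group, one writes group elements near the identity as products of one-parameter subgroups generated by $\textbf{v}_1, \dots, \textbf{v}_m$ and applies the one-parameter result factor by factor.

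The main obstacle is the geometric bookkeeping in the first step: one must justify that $g^{(n)}$ really implements the expected action on functions, i.e., that prolonged graphs are carried to prolonged graphs of transformed functions. This requires restricting to $g$ close enough to the identity that the image of $\Gamma_u$ remains transverse to the fibres of $X \times U \to X$ and is therefore still the graph of a smooth $\tilde u$; once this is granted, the coincidence $g^{(n)}(\Gamma_u^{(n)}) = \Gamma_{\tilde u}^{(n)}$ follows from how prolongation is defined by re-evaluating derivatives. With that technicality dispatched, the remainder is a routine combination of the implicit function theorem and uniqueness of ODE flows.
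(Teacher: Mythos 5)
The paper does not actually prove this statement: it is quoted verbatim as Theorem 2.31 of \cite{Olver}, and the author explicitly defers all proofs in this appendix to that book. Your argument is correct and is essentially Olver's own proof of that theorem --- maximal rank makes the solution variety a regular embedded submanifold, the infinitesimal criterion says exactly that \(\textbf{v}^{(n)}\) is tangent to it, uniqueness of ODE flows then gives invariance under \(G^{(n)}\), and the identification of \(g^{(n)}\bigl(\Gamma_u^{(n)}\bigr)\) with \(\Gamma_{\tilde u}^{(n)}\) for \(g\) near the identity turns invariance of the variety into the statement that solutions are mapped to solutions --- so there is nothing to add.
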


Thus, the method for finding symmetry groups is to make the ansatz (\ref{VectorFieldDef}) for \(\textbf{v}\), prolong it using expressions (\ref{VectorFieldProlongation}) and (\ref{VectorFieldComponents}), apply it as a differential operator to the system (\ref{DeltaSystem}), and find the conditions for which this expression is zero. Then \(\textbf{v}\) is an infinitesimal generator of the symmetry group, so finding \(G\) is just a matter of exponentiating the vector field.

\subsection{Using Jets}

The computations required to determine \(\textbf{v}\) become increasingly more involved as the number of variables or the number of equations in the system grows. A semi-automatic process, called Jets, is used here to solve this problem. Jets is implemented in Maple, and it is well suited for dealing with large symbolic expressions appearing in the ansatz for \(\textbf{v}^{(n)}\). More concretely, what happens is the following:

Let \(\textbf{v}\) be defined as in (\ref{VectorFieldDef}). As a computational trick, define
\[Q^{\alpha}\left(\bar x, \bar u^{(1)}\right) = \phi_{\alpha}(\bar x, \bar u) - \sum_{i=1}^p \xi^i(\bar x, \bar u) u_i^{\alpha}, \quad \alpha = 1, \dots, q.\]
We call \(Q = (Q^1, \dots, Q^q)\) the \emph{characteristic} of \(\textbf{v}\). Note that one can recover \(\textbf{v}\) from \(Q\) through the relations
\begin{equation} \label{RecoverVFromQRelations}
\begin{cases}
\xi^i(\bar x, \bar u) = - \frac{\partial}{\partial u_i^{\alpha}} Q^{\alpha}, \\
\phi_{\alpha}(\bar x, \bar u) = Q^{\alpha}\left(\bar x, \bar u^{(1)}\right) + \sum_{i=1}^p \xi^i(\bar x, \bar u) u_i^{\alpha}.
 \end{cases}\end{equation}
Jets is built to produce \(Q\), so that we can recover \(\textbf{v}\) and exponentiate it to find the symmetry group.

The Novikov equation, as stated before, is
\[u_{xxt} - u_t = 4u^2u_x - 3uu_xu_{xx} - u^2u_{xxx}.\]
We note that this is just a single third-order partial differential equation, with two independent and one dependent variable. This means that one can drop the \(\alpha\)'s and the bar on \(\bar u\) in the equations above. Also, let \(x^1 = x\), \(x^2 = t\), so that the ansatz for \(\textbf{v}\) becomes
\[\textbf{v} = \xi^x(x, t, u) \frac{\partial}{\partial x} + \xi^t(x, t, u) \frac{\partial}{\partial t}+ \phi(x, t, u) \frac{\partial}{\partial u},\]
and its third prolongation 
\begin{align*}
\textbf{v}^{(3)} =  \textbf{v} &+ \phi^x\frac{\partial}{\partial u_x} + \phi^t\frac{\partial}{\partial u_t} + \phi^{xx}\frac{\partial}{\partial u_{xx}} + \phi^{xt}\frac{\partial}{\partial u_{xt}} + \phi^{tt}\frac{\partial}{\partial u_{tt}} + \\
&+ \phi^{xxx}\frac{\partial}{\partial u_{xxx}} + \phi^{xxt}\frac{\partial}{\partial u_{xxt}} + \phi^{xtt}\frac{\partial}{\partial u_{xtt}} + \phi^{ttt}\frac{\partial}{\partial u_{ttt}}.
\end{align*}

If one wanted to do the work manually one would now compute the coefficients \(\phi^x\), etc., using \hyperref[Th:1]{Theorem \ref{Th:1}}, apply \(\textbf{v}^{(3)}\) to the Novikov equation, and find conditions on the \(\xi\)'s and \(\phi\). Instead, let's go with Jets, and study the characteristic
\[Q(x, t, u, u_x, u_t) = \phi(x, t, u) - \xi^x(x, t, u) u_x - \xi^t(x, t, u) u_t.\]

With the following setup, Jets will generate all conditions for \(Q\) being the characteristic of the Novikov equation:
\begin{verbatim}
> read("Jets.s");
> coordinates([x,t], [u], 3);
> equation ('u_xxt' = u_t + 4*u^2*u_x - 3*u*u_x*u_xx - u^2*u_xxx);
> S := symmetries(u = Q);
> dependence(Q(x, t, u, u_t, u_x));
> unknowns(Q);
> run(S);
> dependence();
> S1 := clear(pds);
\end{verbatim}

We find that Q depends on all variables in general, and must satisfy the following conditions:
\begin{subequations}
\begin{gather}
\frac{\partial^2}{\partial t^2} Q = \frac{\partial^2}{\partial u_x^2} Q = \frac{\partial^2}{\partial u_t^2} Q = 0, \label{NovikovQConditionsA} \\
\frac{\partial^2}{\partial t \partial x} Q = \frac{\partial^2}{\partial u_x \partial t} Q = \frac{\partial^2}{\partial u_t \partial x} Q = \frac{\partial^2}{\partial u_t \partial u_x} Q = 0, \label{NovikovQConditionsB} \\
\left[\frac{\partial^2}{\partial u_t \partial t} - \frac{1}{u_t}\frac{\partial}{\partial t} \right] Q  = 0, \\
\left[\frac{\partial}{\partial u} + \frac{1}{u}\left(u_x \frac{\partial}{\partial u_x} + u_t \frac{\partial}{\partial u_t} - 1\right) \right] Q = 0, \\
\left[\frac{\partial^2}{\partial u_x \partial x} + \frac{2}{u}\left(1 - u_x \frac{\partial}{\partial u_x} - u_t \frac{\partial}{\partial u_t}\right) - \frac{1}{u_t}\frac{\partial}{\partial t}\right] Q  = 0, \\
\left[\frac{\partial^2}{\partial x^2} + \frac{2u_x}{u}\frac{\partial}{\partial x} + \frac{2(u^2 - u_x^2)}{u u_t}\frac{\partial}{\partial t} + \frac{4(u^2 - u_x^2)}{u^2}\left(u_x\frac{\partial}{\partial u_x} + u_t\frac{\partial}{\partial u_t} - 1\right) \right] Q = 0.
\end{gather}
\end{subequations}
It follows from (\ref{NovikovQConditionsA}) and (\ref{NovikovQConditionsB}) that the characteristic \(Q\) is a polynomial of first degree in both \(u_x\) and \(t\), with no mixed terms, so one can split it into three parts, denoted \(Q_0\), \(Q_1\) and \(Q_2\), that only depend on \(u\), \(x\) and \(u_t\), so that \(Q\) = \(Q_0 u_x + Q_1 t + Q_2\). This simplifies the dependence of Q, so we run Jets again:
\begin{verbatim}
> Q := Q0*u_x + Q1*t + Q2;
> dependence(Q0(u, x, u_t), Q1(u, x, u_t), Q2(u, x, u_t));
> unknowns(Q0, Q1, Q2);
> run(S1);
> dependence();
> S2 := clear(pds);
\end{verbatim}

This time, Jets is able to reduce the dependencies, so that \(Q_0\) now only depends on \(x\), while \(Q_1\) only depends on \(u_t\). However, \(Q_2\) still depends on \(u\), \(x\), and \(u_t\). The list of conditions is now more manageable:
\begin{subequations}
\begin{gather}
\left(\frac{\partial^3}{\partial x^3} - 4\frac{\partial}{\partial x}\right) Q_0 = 0, \label{NovikovQConditions2A} \\
\left(\frac{\partial}{\partial u_t} - \frac{1}{u_t}\right) Q_1 = 0, \label{NovikovQConditions2B}\\
\frac{u}{2}\frac{\partial^2}{\partial x^2}Q_0 + \frac{\partial}{\partial x}Q_2 = 0, \label{NovikovQConditions2C} \\
\frac{1}{2}\frac{\partial}{\partial x}Q_0 - \frac{1}{2u_t}Q_1 + \frac{\partial}{\partial u}Q_2 = 0, \\
-\frac{u}{2u_t}\frac{\partial}{\partial x}Q_0 + \frac{u}{2u_t^2}Q_1 + \left(\frac{\partial}{\partial u_t} - \frac{1}{u_t}\right)Q_2 = 0. \label{NovikovQConditions2E}
\end{gather}
\end{subequations}
Now, conditions (\ref{NovikovQConditions2A}) and (\ref{NovikovQConditions2B}) imply that
\begin{gather*}
Q_0 = Q_{00}e^{2x} + Q_{01}e^{-2x} + Q_{02}, \\
Q_1 = Q_{10}u_t, \end{gather*}
where \(Q_{00}\) up to \(Q_{10}\) are constants. Inserting these expressions into conditions (\ref{NovikovQConditions2C}) through (\ref{NovikovQConditions2E}) and solving for \(Q_2\) gives
\[Q_2 = -u Q_{00}e^{2x} + u Q_{01} e^{-2x} + \frac{u}{2}Q_{10} + u_t Q_{20},\]
where \(Q_{20}\) is also constant.

We conclude that the most general characteristic for the Novikov equation is
\begin{equation}
Q = \left(-u e^{2x} + u_x e^{2x}\right)Q_{00} + \left(u e^{-2x} + u_x e^{-2x}\right)Q_{01} + u_x Q_{02} + \left(\frac{1}{2}u + t u_t\right)Q_{10} + u_t Q_{20}.
\label{NovikovCharacteristic} \end{equation}
Note that it has five degrees of freedom, which correspond to five different generators for the symmetry group. From the characteristic, we recover five infinitesimal generators, using (\ref{RecoverVFromQRelations}).
\begin{gather*}
\textbf{v}_1 = -\frac{\partial}{\partial x}, \\
\textbf{v}_2 = -\frac{\partial}{\partial t}, \\
\textbf{v}_3 = -\frac{\partial}{\partial t} + \frac{u}{2}\frac{\partial}{\partial u}, \\
\textbf{v}_4 = -e^{2x}\frac{\partial}{\partial x} -e^{2x}u\frac{\partial}{\partial u}, \\
\textbf{v}_5 = -e^{-2x}\frac{\partial}{\partial x} +e^{-2x}u\frac{\partial}{\partial u}. \end{gather*}
Exponentiating the vector fields, we find the symmetry group of the Novikov equation.
\begin{Theorem} \label{Th:3}
If \(u = f(x, t)\) solves the Novikov equation (\ref{eq:Novikov}), then so do
\begin{gather*}
u_1 = f(x - \varepsilon, t), \\
u_2 = f(x, t - \varepsilon), \\
u_3 = e^{\varepsilon/2} f(x, te^{\varepsilon}), \\
u_4 = \sqrt{1 + 2\varepsilon e^{2x}} f\left(-\frac{1}{2} \textnormal{ln}\left(e^{-2x} +2\varepsilon\right), t\right), \\
u_5 = \sqrt{1 + 2\varepsilon e^{-2x}} f\left(\frac{1}{2} \textnormal{ln}\left(e^{2x} +2\varepsilon\right), t\right).\end{gather*}
\end{Theorem}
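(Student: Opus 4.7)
The strategy is to invoke Theorem~\ref{Th:2}: since the preceding Jets computation has already produced five infinitesimal generators $\textbf{v}_1,\dots,\textbf{v}_5$ satisfying the symmetry condition for (\ref{eq:Novikov}), what remains is purely group-theoretic, namely to exponentiate each generator to a one-parameter transformation of $(x,t,u)$-space and to read off the induced action on a solution $u=f(x,t)$. For a general generator $\textbf{v}=\xi^x\partial_x+\xi^t\partial_t+\phi\,\partial_u$ this amounts to integrating the characteristic ODE system
\[
\tfrac{d\tilde{x}}{d\varepsilon}=\xi^x,\qquad \tfrac{d\tilde{t}}{d\varepsilon}=\xi^t,\qquad \tfrac{d\tilde{u}}{d\varepsilon}=\phi,\qquad (\tilde x,\tilde t,\tilde u)\big|_{\varepsilon=0}=(x,t,u),
\]
and then using the identity $u=f(x,t)$ to express $\tilde u$ as a function of $(\tilde x,\tilde t)$ alone.

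For $\textbf{v}_1$ and $\textbf{v}_2$ the flow is a pure translation, so $u_1$ and $u_2$ are immediate. For $\textbf{v}_3$ (whose coefficient of $\partial_t$ must be read as $-t$, rather than $-1$, in order to generate the coupled scaling $t\mapsto te^\varepsilon$, $u\mapsto e^{\varepsilon/2}u$) the ODEs likewise decouple and one reads off $u_3$. The substantive work is in $\textbf{v}_4$ and $\textbf{v}_5$. Here the key observation is that $ue^{-x}$ is a first integral of $\textbf{v}_4$ and $ue^{x}$ is a first integral of $\textbf{v}_5$ (a direct computation gives $\textbf{v}_4(ue^{-x})=\textbf{v}_5(ue^{x})=0$), so along the flow one has $\tilde u=u\,e^{\tilde x-x}$ for $\textbf{v}_4$ and $\tilde u=u\,e^{x-\tilde x}$ for $\textbf{v}_5$. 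The spatial ODE is separable and integrates to $e^{-2\tilde x}=e^{-2x}+2\varepsilon$ in the $\textbf{v}_4$ case and to $e^{2\tilde x}=e^{2x}-2\varepsilon$ in the $\textbf{v}_5$ case. Inverting these relations (equivalently reparametrising $\varepsilon\to-\varepsilon$ to match the sign convention of the theorem) produces the logarithmic arguments in $u_4$ and $u_5$, and substituting back into the conservation law yields the prefactors $\sqrt{1+2\varepsilon e^{\pm 2x}}$.

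The only real obstacle is the careful bookkeeping in this inversion step for $\textbf{v}_4$ and $\textbf{v}_5$: one must integrate the separable equation, invert a transcendental relation between $\tilde x$ and $x$, and combine this with the conserved quantity while tracking the sign of $\varepsilon$ consistently. Every other step reduces to elementary integration of a linear ODE. Once all five group-action formulas are in hand, Theorem~\ref{Th:2} closes the proof automatically, since each exponential of a symmetry generator is by construction a local symmetry group.
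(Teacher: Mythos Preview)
Your approach is exactly the one taken in the paper: the Jets computation supplies the five infinitesimal generators, and the theorem is obtained by exponentiating them and invoking Theorem~\ref{Th:2}. The paper in fact gives fewer details than you do---it simply states ``Exponentiating the vector fields, we find the symmetry group'' and remarks that ``the last two are best checked by computer''---so your identification of the first integrals $ue^{-x}$ and $ue^{x}$ for $\textbf{v}_4$ and $\textbf{v}_5$, together with the explicit integration and sign-tracking, is a genuine improvement in exposition over the paper's account.

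Your parenthetical remark about $\textbf{v}_3$ is also correct and worth flagging: from the characteristic $Q_{10}(\tfrac12 u + t u_t)$ and the recovery formulas (\ref{RecoverVFromQRelations}) one gets $\xi^t=-t$, not $\xi^t=-1$, so the printed generator $\textbf{v}_3=-\partial_t+\tfrac{u}{2}\partial_u$ is a typographical slip for $\textbf{v}_3=-t\,\partial_t+\tfrac{u}{2}\partial_u$. Only the latter exponentiates to the scaling $u_3=e^{\varepsilon/2}f(x,te^{\varepsilon})$ stated in the theorem.
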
 It is easy to check the first three by inspecting the equation; the last two are best checked by computer.

Finally, while computing the Lie symmetries of the Novikov equation, we also did the same for its two-component generalization due to Geng--Xue \cite{GXOriginal}. While not directly relevant to this article, this might be a good place to mention the results. The Geng--Xue system is given by
\[\begin{cases}
\label{eq:GX}
u_{xxt} - u_t = (u_x - u_{xxx})uv + 3(u - u_{xx})vu_x, \\
v_{xxt} - v_t = (v_x - v_{xxx})uv + 3(v - v_{xx})uv_x. \end{cases}\]
Proceeding with the help of Jets as before, we find the following symmetries.

\begin{Theorem}
If \[\begin{cases}
u = f(x, t), \\
v = g(x, t), \end{cases}\] solves the Geng--Xue system (\ref{eq:GX}), then so do
\begin{align*}
&\begin{cases}
u_1 = \sqrt{1 + 2\varepsilon e^{2x}} f\left(-\frac{1}{2} \textnormal{ln}\left(e^{-2x} +2\varepsilon\right), t\right), \\
v_1= \sqrt{1 + 2\varepsilon e^{2x}} g\left(-\frac{1}{2} \textnormal{ln}\left(e^{-2x} +2\varepsilon\right), t\right), \end{cases} \\
&\begin{cases}
u_2 = \sqrt{1 + 2\varepsilon e^{-2x}} f\left(\frac{1}{2} \textnormal{ln}\left(e^{2x} +2\varepsilon\right), t\right), \\
v_2 = \sqrt{1 + 2\varepsilon e^{-2x}} g\left(\frac{1}{2} \textnormal{ln}\left(e^{2x} +2\varepsilon\right), t\right), \end{cases} \\
&\begin{cases}
u_3 = f(x - \varepsilon, t), \\
v_3 = g(x - \varepsilon, t), \end{cases}
\begin{cases}
u_4 = f(x, t - \varepsilon), \\
v_4 = g(x, t - \varepsilon), \end{cases} \\ 
&\begin{cases}
u_5 = f(x, te^{\varepsilon}), \\
v_5 = e^{\varepsilon}g(x, te^{\varepsilon}), \end{cases}
\, \begin{cases}
u_6 = e^{\varepsilon}f(x, t), \\
v_6 = e^{-\varepsilon}g(x,t). \end{cases}
\end{align*}
\end{Theorem}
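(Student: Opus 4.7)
The plan is to mirror the Jets-based derivation of \hyperref[Th:3]{Theorem \ref{Th:3}}, but now for the coupled system (\ref{eq:GX}). First I would set up an infinitesimal generator
\[\mathbf{v} = \xi^x\partial_x + \xi^t\partial_t + \phi^u\partial_u + \phi^v\partial_v\]
on \(M \subset X \times U\) with \(U\) parametrised by \((u, v)\), together with two characteristics
\[Q^u = \phi^u - \xi^x u_x - \xi^t u_t, \qquad Q^v = \phi^v - \xi^x v_x - \xi^t v_t,\]
feed (\ref{eq:GX}) into Jets via \texttt{coordinates([x,t], [u,v], 3)} and a two-component \texttt{symmetries} call, and let the package generate the determining equations. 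Based on the Novikov calculation, I expect these to force \(Q^u\) and \(Q^v\) to be polynomials of low degree in the first-order jet variables \(u_x, u_t, v_x, v_t\).

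Next I would split each characteristic into a sum of such monomials with coefficients depending on \((x, t, u, v)\), declare the coefficients as new unknowns through the usual \texttt{dependence}/\texttt{unknowns}/\texttt{run}/\texttt{clear} cycle, and iterate until the remaining conditions reduce to a finite-dimensional linear system. The theorem claims six one-parameter families, so I expect exactly six integration constants to survive. From each basis vector I recover an infinitesimal generator via (\ref{RecoverVFromQRelations}) and exponentiate the corresponding flow, obtaining the six transformations in the statement.

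As a sanity check I note that setting \(v = u\) in (\ref{eq:GX}) collapses both equations to the Novikov equation (\ref{eq:Novikov}), so every Novikov symmetry from \hyperref[Th:3]{Theorem \ref{Th:3}} should lift to a Geng--Xue symmetry by applying the same \(x\)-substitution and amplitude factor to both components; this is exactly the pattern exhibited by \((u_1, v_1)\) and \((u_2, v_2)\). The translation families \((u_3, v_3)\) and \((u_4, v_4)\) are immediate by inspection, and the scalings \((u_5, v_5)\) and \((u_6, v_6)\) can be verified by direct substitution: the bilinearity of the right-hand side of (\ref{eq:GX}) in \(u\) and \(v\) makes \(u \mapsto e^\varepsilon u\), \(v \mapsto e^{-\varepsilon} v\) manifestly preserving, while \(u \mapsto f(x, te^\varepsilon)\), \(v \mapsto e^\varepsilon g(x, te^\varepsilon)\) rescales the time derivative on the left and the \(uv\)-product on the right by a common factor.

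The main obstacle is purely combinatorial: with two dependent variables, the initial output of Jets has roughly twice as many terms and branching possibilities as in the Novikov case, so correctly iterating the split-and-rerun cycle without overlooking mixed monomials of \(Q^u\) and \(Q^v\) is where mistakes are most likely. Once the determining system has been reduced to a six-dimensional kernel, reading off the generators and exponentiating them is mechanical, and the fact that no further symmetries are missed follows from the same appeal to \hyperref[Th:2]{Theorem \ref{Th:2}} used in the Novikov derivation.
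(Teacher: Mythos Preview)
Your proposal is correct and follows essentially the same approach as the paper, which simply states ``Proceeding with the help of Jets as before, we find the following symmetries'' without further detail. Your outline is in fact more explicit than the paper's own treatment, and your sanity checks (the reduction to Novikov under \(v=u\) and the direct verification of the scalings) are valid and useful additions.
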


\section*{Acknowledgements}
The author would like to thank the organizers of the Silesian Mathematical Summer School on Geometry of Differential Equations, given 17--21 September 2012 at the Silesian University, Opava, Czech Republic, for giving the theoretical background and practical demonstration on how to use the Jets algorithm.

Thanks also to Hans Lundmark, Stefan Rauch and Joakim Arnlind for reading the manuscript and providing valuable comments and suggestions for improvements.

%\begin{thebibliography}{00}
%%1
%\bibitem{ahmt}
%P.N. \'Anh, D. Herbera and C. Menini, AB-5$^*$ for module and ring extensions, \textit{Abelian Groups, Module Theory and Topology}, Lect. Notes. in Pure and Appl. Math. \textbf{201} (Marcel Dekker, 1999) 59--68.
%
%%2
%\bibitem{campsdicks}
%R. Camps and W. Dicks, On semilocal rings, \textit{Israel J. Math.} \textbf{81} (1993) 203--211.
%
%%3
%\bibitem{jate2}
%A.V. Jategaonkar, Left principal ideal domains, \textit{J. Algebra} \textbf{8} (1968) 148--155.
%
%%4
%\bibitem{jate}
%A.V. Jategaonkar, Morita duality and Noetherian rings, \textit{J. Algebra} \textbf{69} (1981) 358--371.
%
%%5
%\bibitem{kasch}
%F. Kasch, \textit{Modules and Rings} (Academic Press, London, New York, 1982).
%
%%6
%\bibitem{lam}
%T.Y. Lam, \textit{Lectures on Modules and Rings}, Graduate Texts in Math. \textbf{189} (Springer-Verlag, Berlin, New York, Heidelberg, 1999).
%
%%7
%\bibitem{lem}
%B. Lemonnier, AB-5$^*$ et la dualit\'e de Morita, \textit{C. R. Acad. Sc. Paris t. Series A}, \textbf{289} (1979) 47--50.
%
%%8
%\bibitem{leptin}
%H. Leptin, Linear kompakte Moduln und Ringe, \textit{Math. Z.} \textbf{62} (1955) 241--267.
%
%%9
%\bibitem{menini}
%C. Menini, Jacobson's conjecture, Morita duality and related questions, \textit{J. Algebra} \textbf{103} (1986) 638--655.
%
%%10
%\bibitem{mul}
%B.J. M\"uller, On Morita duality, \textit{Canad. J. Math.} \textbf{26} (1969) 1338--1347.
%
%%11
%\bibitem{vam}
%P. Vamos, Rings with duality, \textit{Proc. London Math. Soc.} \textbf{35} (1977) 275--289.
%
%\end{thebibliography}

\end{document}